\newtheorem{prop}{\protect\propositionname}
\providecommand{\propositionname}{Proposition}
\DeclareMathOperator{\tr}{Tr} 			
\DeclareMathOperator{\id}{\mathrm{id}} 	
\newcommand{\idop}{\openone} 			
\newcommand{\cB}{\mathcal{B}}
\newcommand{\cD}{\mathcal{D}}
\newcommand{\cH}{\mathcal{H}}
\begin{document}

\title{Structure of correlated initial states that guarantee completely positive reduced dynamics}
\author{Xiao-Ming Lu}
\email{luxiaoming@gmail.com}
\affiliation{Department of Electrical and Computer Engineering, National University of Singapore, 4 Engineering Drive 3, Singapore 117583}

\begin{abstract}
We use the Koashi-Imoto decomposition of the degrees of freedom of joint system-environment initial states to investigate the reduced dynamics.
We show that a subset of joint system-environment initial states guarantees completely positive reduced dynamics, if and only if the system privately owns all quantum degrees of freedom and can locally access the classical degrees of freedom, without disturbing all joint initial states in the given subset.
Furthermore, we show that the quantum mutual information for such kinds of states must be independent of the quantum degrees of freedom. 
\end{abstract}

\pacs{03.65.Yz, 03.67.-a}

\maketitle

\section{Introduction}
As the standard tool mathematically describing quantum operations and quantum channels, the completely positive (CP) linear maps between quantum states are of great importance in quantum information processing and the dynamics of open quantum systems~\cite{Kraus1983,Nielsen2000,Choi1975,Davies1976,Alicki2007,Breuer2002,Rivas2012}. 
It is well known that for uncorrelated joint initial states with a fixed environmental state, the reduced dynamical maps for open quantum systems from the initial time to any later time are always CP, no matter what the interaction is~\cite{Davies1976,Alicki2007,Breuer2002,Rivas2012}.
For correlated joint initial states, characterizing the reduced dynamical maps is much more complicated~\cite{Pechukas1994,Alicki1995,Pechukas1995,Stelmachovic2001,Jordan2004,Carteret2008,Dai2015}, and until now not too much about it has been known.

However, restricting to some special subsets of (correlated) joint initial states, the reduced dynamical maps can be described by CP linear maps~\cite{Pechukas1994,Alicki1995}, so that many elegant structures and features of CP linear maps can be directly applied to study the reduced dynamics, e.g., the non-Markovianity based on indivisibility of CP dynamical maps~\cite{Wolf2008,Breuer2009,Rivas2010,Lu2010a}.
Some efforts have been devoted to pursue a complete characterization of the structure of subsets of joint initial states that guarantee CP reduced dynamics for any joint unitary evolution~\cite{Rodriguez-Rosario2008,Shabani2009,Brodutch2013,Liu2014,Buscemi2014a,Dominy2016,Dominy2016a,Tuerkmen2016}.
It is particularly concerned whether the system-environment correlations are relevant or not~\cite{Rodriguez-Rosario2008,Shabani2009,Brodutch2013,Liu2014,Dominy2016,Tuerkmen2016}, since joint initial states that have only classical correlations was shown to lead to CP reduced dynamics~\cite{Rodriguez-Rosario2008}.
Recently, it was shown that the purely classical correlations is not necessary for the CP reduced dynamics, and in principle any kind of correlation can be present in the joint initial states guaranteeing CP reduced dynamics~\cite{Brodutch2013,Liu2014}. 
An alternative approach is to consider the families of joint initial states that are post-selected via an ancillary from an ancillary-system-environment tripartite state.
In such a case, the reduced dynamics is CP if and only if the tripartite state is a short quantum Markov chain~\cite{Buscemi2014a}.
The special case where the tripartite short quantum Markov chain is a pure state was elaborately discussed in Ref.~\cite{Tuerkmen2016}. 
Yet despite all that, a structure theory of subsets of joint initial states guaranteeing CP reduced dynamics is still not available.

In this paper, we use the Koashi-Imoto (KI) decomposition to tackle this question. 
The KI decomposition is an elegant way to characterize the structure of subsets of quantum states~\cite{Koashi2002}.
In terms of how the information carried by individual states can be accessed by the operations that leave invariant all states in the given subset, the KI decomposition classifies the degrees of freedom of individual states into three parts:
the quantum part is inaccessible, the classical part is read-only, and the redundant part is the same for all states in the given subset~\cite{Koashi2002}. 
We show that a subset of joint initial states guarantees CP reduced dynamics, if and only if the system privately owns all quantum degrees of freedom and can locally access the classical degrees of freedom, without disturbing all joint initial states in the given subset.  
Moreover, we show that the correlations play a role in the CP reduced dynamics from a structural perspective:
the quantum mutual information must be independent of the quantum degrees of freedom.

This paper is organized as follows. 
In Sec.~\ref{sec:reduced_dynamics}, we give a brief review on the reduced dynamics and the issue of the complete positivity of reduced dynamical maps. 
In Sec.~\ref{sec:KI}, we introduce the KI decomposition for a subset of quantum states.
In Sec.~\ref{sec:structure_theory}, we use the KI decomposition to investigate the structure of the joint initial states that guarantee the complete positivity of reduced dynamical maps.
We summarize our results in Sec.~\ref{sec:conclusion}.

\section{Reduced dynamics}\label{sec:reduced_dynamics}

Let $\cH_s$ and $\cH_e$ be the Hilbert spaces of the system and its environment respectively, which are assumed to be finite-dimensional throughout this work. 
The entirety as a closed dynamical system experiences a joint unitary evolution as $\rho_{se}\mapsto U\rho_{se} U^{\dagger}$, where $U$ is a unitary operator acting on $\cH_s\otimes\cH_e$.
The reduced dynamical maps are defined as the transformation of the reduced states of the system, namely, 
\begin{equation}\label{eq:rd}
	\Upsilon:\rho_s\mapsto\tr_e(U\rho_{se} U^\dagger),
\end{equation}
where $\rho_s=\tr_e(\rho_{se})$ is the reduced density operator for the open quantum system, and $\tr_e$ denotes the partial trace with respect to the environment.
To make $\Upsilon$ well-defined by Eq.~\eqref{eq:rd} for any unitary transformation $U$, the joint initial states $\rho_{se}$ must be restricted in a subset, known as the {\em initial condition}, such that each $\rho_s$ corresponds to a single $\rho_{se}$.
Otherwise, for some unitary transformations there will be two different final reduced states that correspond to a single reduced initial state~\cite{Pechukas1994,Dominy2016,Dominy2016a}. 
In other words, the partial trace $\tr_e$ must be a one-to-one map, when restricted to the subset of joint initial states.
The reduced dynamical map can be generalized by identifying a different system-environment partition at the final time~\cite{Buscemi2014a}, namely, 
\begin{equation}
	\Upsilon[s\to s']:\rho_s \to \tr_{e'}(U\rho_{se} U^\dagger),
\end{equation}
where $s'$ and $e'$ constitute another system-environment partition such that $\cH_s\otimes\cH_e=\cH_{s'}\otimes\cH_{e'}$.

Let $\cB(\cH)$ be the space of all bounded linear operators on a Hilbert space $\cH$. 
A linear map $\Phi$ from $\cB(\cH_1)$ to $\cB(\cH_2)$ is called a CP map, if $\id_n\otimes\Phi$ maps all positive semidefinite operator in $\cB(\mathbb{C}^n\otimes\cH_1)$ to positive semidefinite operators in $\cB(\mathbb{C}^n\otimes\cH_2)$ for any positive integer $n$, where $\id_n$ denote the identity maps from $\cB(\mathbb{C}^n)$ to $\cB(\mathbb{C}^n)$.
We say that a reduced dynamical map $\Upsilon$ admits a CP description, if there exists a CP extension of $\Upsilon$ to all density operators of the system, that is, there exists a CP map $\Phi$ such that $\Upsilon(\rho_s)=\Phi(\rho_s)$ for all $\rho_s$ in the domain of $\Upsilon$.
Furthermore, we say that a subset of joint initial states {\em guarantees CP reduced dynamics}, if any generalized reduced dynamical map $\Upsilon[s\to s^\prime]$ (for any joint unitary transformation and any system-environment partition at the final time) admits a CP description.
Characterizing the structure of the subsets of joint initial states that guarantee CP reduced dynamics is the main purpose of this work.  

\section{KI decomposition}\label{sec:KI}

The CP description of reduced dynamics, if exists, is in general dependent on the common characteristics shared by the joint initial states, and independent of the individual characteristics carried by the joint initial states in the given subset.
For instance, in the canonical case of product initial states, the Kraus operators of the reduced dynamical map depend on the fixed state of the environment.
Similarly, for a set of correlated initial states, the information concerning the common characteristics of joint initial states can be incorporated into the description of the reduced dynamical map.
The KI decomposition~\cite{Koashi2002} is such a method that separates the common characteristics and the individual characteristics of states in a given subset, and thus will be used for this problem.

The KI decomposition is restated by Hayden {\it et al.}~\cite{Hayden2004} as follows.
For a set $\cD$ of finite-dimensional density operators, there exists a unique decomposition of the Hilbert space as 
\begin{equation}\label{eq:KI_Hilbert_Space}
	\cH = \bigoplus_j \cH_{l_j}\otimes\cH_{r_j}
\end{equation}
such that the following two conditions are satisfied.
(i) Every density operator $\rho$ in $\cD$ can be expressed as 
\begin{equation}\label{eq:KI_states}
	\rho = \bigoplus_j p_j \rho_{l_j}\otimes\tilde\omega_{r_j}.
\end{equation}
(ii) Every quantum operation $\Phi$ that leaves all $\rho$ in $\cD$ invariant satisfies 
\begin{equation}\label{eq:KI_operation}
	\Phi|_{\cB(\cH_{l_j} \otimes \cH_{r_j})} = \id_{l_j} \otimes \Phi_{r_j},
\end{equation}
where $\{p_j\}$ is a probability distribution, $\tilde\omega_{r_j}$ are fixed density operators on $\cH_{r_j}$, $\id_{l_j}$ are the identity maps on $\cB(\cH_{l_j})$, and $\Phi_{r_j}$ are quantum operations---CP and trace-preserving linear maps---satisfying $\Phi_{r_j}(\tilde\omega_{r_j})=\tilde\omega_{r_j}$ for all $j$. 
In Eq.~\eqref{eq:KI_states}, the direct sum $\oplus$ of matrices $A_j$ stands for
\begin{equation}
\bigoplus_{j=1}^{n} A_j = 
\begin{bmatrix}
 A_1 & \mathbf{0} & \cdots & \mathbf{0} \\
 \mathbf{0} & A_2 & \cdots & \mathbf{0} \\
 \vdots & \vdots & \ddots & \vdots \\
 \mathbf{0} & \mathbf{0} & \cdots & A_n \\
\end{bmatrix},
\end{equation}
where $\mathbf{0}$ is a zero matrix of proper dimension.
The tilde notation, $\tilde\bullet$, on a density operator means that we are considering a fixed density operator, henceforth.

By the KI decomposition, the degrees of the freedom of states in a given subset are classified into three parts in terms of Eq.~(\ref{eq:KI_states}), see Ref.~\cite{Koashi2002}: 
The classical part $\{p_j\}$ can be accessed by the projective measurement, whose elements are the projections onto the subspace $\cH_{l_j}\otimes\cH_{r_j}$, without disturbing all states in the given subset. 
The quantum part $\{\rho_j\}$ are inaccessible by the quantum operations that do not disturb all states in the given subset. 
The knowledge about the Hilbert space decomposition Eq.~\eqref{eq:KI_Hilbert_Space} and the redundant part $\{\tilde \omega_j\}$ are the same for all states in the given subset, thus can be incorporated into the CP description of reduced dynamical maps.

\section{Structure of joint initial states}\label{sec:structure_theory}

\begin{figure}
\includegraphics{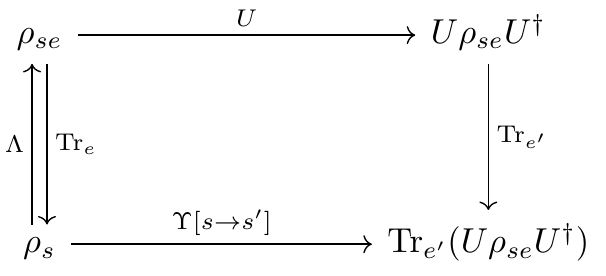}
\caption{\label{fig:fig1} Commutative diagram of the relevant maps between quantum states.} 
\end{figure}

In order to bridge the complete positivity of reduced dynamical maps and the KI decompositions, we resort to the assignment maps~\cite{Pechukas1994}, which assigns each reduced initial state in the domain to a joint initial state. 
See Fig.~\ref{fig:fig1} for the relationships between the relevant maps. 
Note that an assignment map $\Lambda$ can be in principle defined on the domain as the inverse of $\tr_e$, as the joint initial states $\rho_{se}$ must be restricted in a subset such that each $\rho_s$ corresponds to a single $\rho_{se}$, in order to make the reduced dynamical maps given by Eq.~\eqref{eq:rd} well-defined for any joint unitary transformation $U$. 
The generalized reduced dynamical maps can then be expressed as $\Upsilon[s\to s^\prime]:\rho_s \mapsto \tr_{e^\prime}(U\Lambda(\rho_s)U^\dagger)$. 

Now, we give our main result by the following proposition.

\begin{prop}\label{prop:main}
For a given subset $\cD_{se}$ of joint initial states, the following statements are equivalent.
\begin{enumerate}
	\item[{\rm (a)}] Any generalized reduced dynamical map admits a CP description.
	\item[{\rm (b)}] There exists a CP assignment map, in the sense that the assignment map $\Lambda$ defined on the subset of reduced initial states admits a CP extension to all density operators.
	\item[{\rm (c)}] There exists a decomposition $\cH_s=\bigoplus_j\cH_{l_j}\otimes\cH_{r_j}$ for the system Hilbert space such that every state in $\cD_{se}$ can be expressed as 
\begin{equation}\label{eq:state_decomposition}
	\rho_{se} = \bigoplus_j p_j \rho_{l_j}\otimes\tilde\omega_{r_j e},
\end{equation}
where $\{p_j\}$ is a probability distribution, $\rho_{l_j}$ are density operators on $\cH_{l_j}$, and $\tilde\omega_{r_j e}$ are fixed density operators on $\cH_{r_j}\otimes\cH_e$. 
\end{enumerate}	
\end{prop}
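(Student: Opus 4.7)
The plan is to establish the cycle (a) $\Leftrightarrow$ (b) $\Leftrightarrow$ (c), with (a) $\Leftrightarrow$ (b) being nearly immediate and (b) $\Rightarrow$ (c) carrying essentially all the content. For (a) $\Leftrightarrow$ (b) I specialize the generalized dynamics by choosing $U = I$ together with the trivial bipartition $\cH_{s'} = \cH_s\otimes\cH_e$, $\cH_{e'}=\mathbb{C}$; then $\Upsilon[s\to s']$ is literally $\Lambda$, so (a) forces $\Lambda$ to admit a CP extension, giving (b). Conversely, given a CP extension $\tilde\Lambda$, every map $\Upsilon[s\to s'](\rho_s) = \tr_{e'}(U\tilde\Lambda(\rho_s)U^\dagger)$ is a composition of CP maps.

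For (c) $\Rightarrow$ (b) I would exhibit an explicit CP extension. Letting $P_j$ be the orthogonal projector onto $\cH_{l_j}\otimes\cH_{r_j}$, the map
\begin{equation*}
\tilde\Lambda(X) := \bigoplus_j \tr_{r_j}(P_j X P_j)\otimes \tilde\omega_{r_j e}
\end{equation*}
is manifestly CP (a sum of block projections, partial traces, and tensorings with fixed states), and evaluating it on $\rho_s = \bigoplus_j p_j \rho_{l_j}\otimes\tilde\omega_{r_j}\in \cD_s$ recovers Eq.~(\ref{eq:state_decomposition}).

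For (b) $\Rightarrow$ (c), the main step, I apply the KI decomposition to the marginal set $\cD_s := \tr_e(\cD_{se})$, obtaining $\cH_s = \bigoplus_j \cH_{l_j}\otimes\cH_{r_j}$ with $\rho_s = \bigoplus_j p_j \rho_{l_j}\otimes\tilde\omega_{r_j}$. After choosing the CP extension $\tilde\Lambda$ to be trace-preserving, the composition $M := \tr_e\circ\tilde\Lambda$ is a quantum operation on $\cB(\cH_s)$ fixing every element of $\cD_s$, so condition (ii) of KI yields $M|_{\cB(\cH_{l_j}\otimes\cH_{r_j})} = \id_{l_j}\otimes\Phi_{r_j}$ with $\Phi_{r_j}(\tilde\omega_{r_j}) = \tilde\omega_{r_j}$. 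I then lift this block structure from $M$ to $\tilde\Lambda$ by Stinespring uniqueness: a minimal Stinespring dilation of $\id_{l_j}\otimes\Phi_{r_j}$ can be taken as $I_{l_j}\otimes V_{r_j}$, and the Stinespring dilation of $\tilde\Lambda|_{\cB(\cH_{l_j}\otimes\cH_{r_j})}$, which refines that of $M$ through an additional partial trace over $\cH_e$, differs from it only by an isometry acting on the dilating ancilla. The trivial action on $\cH_{l_j}$ therefore survives, forcing $\tilde\Lambda|_{\cB(\cH_{l_j}\otimes\cH_{r_j})} = \id_{l_j}\otimes\Psi_{r_j}$ for some CP map $\Psi_{r_j}:\cB(\cH_{r_j})\to\cB(\cH_{r_j}\otimes\cH_e)$. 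Setting $\tilde\omega_{r_j e} := \Psi_{r_j}(\tilde\omega_{r_j})$ and evaluating $\tilde\Lambda$ on a generic $\rho_s\in\cD_s$ then produces Eq.~(\ref{eq:state_decomposition}) for every $\rho_{se}\in\cD_{se}$. The principal obstacle is precisely this Stinespring-lift step, and along the way I expect to need the auxiliary fact that $\tilde\Lambda$ itself preserves the block partition in the codomain, i.e.\ sends positive operators supported in block $j$ into $\cB(\cH_{l_j}\otimes\cH_{r_j}\otimes\cH_e)$; this follows by combining positivity of $\tilde\Lambda$ with the already-established block-preservation of $M$ via the standard vanishing of off-diagonal blocks in a positive operator whose diagonal block has zero trace.
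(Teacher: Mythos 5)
Your proof is correct and follows the same route as the paper: apply the KI decomposition to the marginal set $\cD_s$, use the invariance of $\tr_e\circ\Lambda$ to extract the block product structure, and construct the explicit CP assignment $\rho_s\mapsto\sum_j\tr_{r_j}(\Pi_j\rho_s\Pi_j)\otimes\tilde\omega_{r_je}$ for the converse direction. The one place you go beyond the paper is precisely the step you flag as the principal obstacle: the paper simply writes $\tr_e\circ\Lambda|_{\cB(\cH_{l_j}\otimes\cH_{r_j})}=\id_{l_j}\otimes\tr_e\circ\Lambda_{r_j}$, i.e.\ it asserts without argument that the KI product form of the composite $\tr_e\circ\Lambda$ lifts to $\Lambda$ itself, whereas your block-support lemma (positive operator with traceless diagonal block vanishes) combined with Stinespring uniqueness of the minimal dilation of $\id_{l_j}\otimes\Phi_{r_j}$ supplies exactly the missing justification, and that argument is sound.
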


\begin{proof}
The fact that (a) implies (b) can be seen by noting that the generalized reduced dynamical map with $U=\idop_s\otimes\idop_e$ and $\cH_{s^\prime}=\cH_s\otimes\cH_e$ itself is an assignment map, where $\idop_s$ and $\idop_e$ are the identity operators on $\cH_s$ and $\cH_e$ respectively.

We now prove that (b) implies (c).
Let $\Lambda$ be a CP assignment map satisfying $\tr_e[\Lambda(\rho_s)]=\rho_s$ for all $\rho_s$ in the domain $\cD_s:=\{\tr_e\rho_{se}|\rho_{se}\in\cD_{se}\}$.
Let $\cH_s=\bigoplus_j\cH_{l_j}\otimes\cH_{r_j}$ be the KI decomposition of the system Hilbert space for $\cD_s$.
Since the composite map $\tr_e\circ \Lambda$ leaves all density operators in $\cD_s$ invariant, it must satisfy Eq.~\eqref{eq:KI_operation} with $\Phi=\tr_e\circ \Lambda$, that is,
\begin{equation}
	\tr_e\circ \Lambda|_{\cB(\cH_{l_j}\otimes\cH_{r_j})}=\id_{l_j}\otimes\tr_e\circ\Lambda_{r_j}
\end{equation}
with $\Lambda_{r_j}$ being a CP map from $\cB(\cH_{r_j})$ to $\cB(\cH_{r_j}\otimes \cH_e)$ such that $\tr_e\circ\Lambda_{r_j}(\tilde\omega_{r_j})=\tilde\omega_{r_j}$.
Thus, $\rho_{se}=\Lambda(\rho_s) = \bigoplus_j p_j \rho_{l_j} \otimes \tilde\omega_{r_j e}$ with $\tilde\omega_{r_j e}= \Lambda_{r_j}(\tilde\omega_{r_j})$, which is in the form of Eq.~\eqref{eq:state_decomposition}.

To prove that (c) implies (a), we explicitly construct a CP assignment map for the states of the form Eq.~\eqref{eq:state_decomposition} as follows:
\begin{equation}
	\Lambda(\rho_s)=\sum_j\tr_{r_j}(\Pi_j\rho_s\Pi_j)\otimes\tilde\omega_{r_j e},
\end{equation}
where $\Pi_j$ are the projections from $\cH_s$ onto the subspace $\cH_{l_i}\otimes\cH_{r_j}$.
Note that the unitary channel and the partial trace are both CP, and the compositions of CP maps are also CP. 
Therefore, any generalized reduced dynamical map, $\Upsilon[s\to s^\prime]:\rho_s \mapsto \tr_{e^\prime}(U\Lambda(\rho_s)U^\dagger)$, is CP.
\end{proof}

We shall show how our result are related to the previous ones. 
First, if we require the reduced initial states to contain all density operators on the system Hilbert space, then the KI decomposition of the reduced initial states can only have a single subspace and $\cH_{l_1}=\cH_s$. 
In other words, the joint initial states must be product states with a fixed environmental state, which recovers Pechukas's result~\cite{Pechukas1994}.
Second, Liu and Tong~\cite{Liu2014} showed that the CP reduced dynamics is guaranteed if the joint initial states are in the subsets of the form 
\begin{equation}\label{eq:Liu_Tong}
	\cD_{se}=\left\{
	\left(\bigoplus_{j=1}^{n}p_j\tilde\omega_{se,j}\right)
	\oplus
	\left(\bigoplus_{j=n+1}^Np_j\rho_{s,j}\otimes\tilde\omega_{e,j}\right)
	\right\},
\end{equation} 
where the system Hilbert space is decomposed into $\cH_s=\bigoplus_{j=1}^N\cH_{s,j}$, $\{p_j\}$ is an arbitrary probability distribution, $\rho_{s,j}$ are arbitrary density operators on $\cH_{s,j}$, $\tilde\omega_{se,j}$ are fixed density operator on $\cH_{s,j}\otimes\cH_e$, and $\tilde\omega_{e,j}$ are fixed operators on $\cH_e$. 
This form includes as special cases the previous examples of the subset of joint initial states with vanishing quantum discord~\cite{Rodriguez-Rosario2008} as well as with non-vanishing quantum discord~\cite{Brodutch2013}. 
It is easy to see that the states given in Eq.~\eqref{eq:Liu_Tong} are in the form of Eq.~\eqref{eq:state_decomposition} with $\cH_{l_j}$ being either 1-dimensional or of the same dimension as $\cH_{s,j}$.

Third, Buscemi considered a typical kind of joint initial states that are generated from a fixed ancillary-system-environment tripartite state by post-selecting in the ancillary, that is,
\begin{equation}\label{eq:post_select}
	\cD_{se}=\left\{
	\frac{\tr_a[(E_a\otimes \openone_s\otimes\idop_e)\tilde\rho_{ase}]}
	{\tr[(E_a\otimes \openone_s\otimes\idop_e)\tilde\rho_{ase}]} \mid 0\leq E_a \leq \openone_a
	\right\}
\end{equation}
with $\idop_a$ being the identity operator on the ancillary Hilbert space~\cite{Buscemi2014a}.
It was shown that a post-selected family of joint initial states guarantee CP reduced dynamics, if and only if the tripartite state $\tilde\rho_{ase}$ is an ancillary-system-environment short quantum Markov chain~\cite{Buscemi2014a}.
Note that a short quantum Markov chain $\tilde\rho_{ase}$ can be expressed as
\begin{equation}
	\tilde\rho_{ase}=\bigoplus_j \tilde p_j \tilde\rho_{al_j}\otimes\tilde\rho_{r_je}
\end{equation}
with the decomposition $\cH_s=\bigoplus_j \cH_{l_j}\otimes\cH_{r_j}$ for the system Hilbert space, where $\{\tilde p_j\}$ is a probability distribution, $\tilde\rho_{al_j}$ and $\tilde\rho_{r_je}$ are density operators on $\cH_a\otimes\cH_{l_j}$ and $\cH_{r_j}\otimes\cH_e$ respectively (see Theorem 6 in Ref.~\cite{Hayden2004}).
Then, it is easy to see that the post-selected family from a tripartite short quantum Markov chain are in the form of Eq.~\eqref{eq:state_decomposition}.

In fact, the post-selected families used in Ref.~\cite{Buscemi2014a} are sufficiently general such that a family of joint states in the form of Eq.~\eqref{eq:state_decomposition} can always be generated as a post-selected family, Eq.~\eqref{eq:post_select}, with a tripartite short quantum Markov chain~\cite{Buscemi_private_communication}
\begin{equation}
	\tilde\rho_{ase}=\sum_j|j\rangle\langle j|_{a_1}\otimes\phi_{a_2,l_j}\otimes\tilde\omega_{r_je},
\end{equation} 
where the ancillary is associated with the Hilbert space $\cH_{a_1}\otimes\cH_{a_2}$, and  $\phi_{a_2,l_j}=\sum_{k,l=1}^{\dim\cH_{l_j}} |k\rangle\langle l|\otimes|k\rangle\langle l|/(\dim\cH_{l_j})$ are maximally entangled states defined on $\cH_{a_2}\otimes\cH_{l_j}$ for all $j$.  
Note that the dimension of $\cH_{a_2}$ should be not less than the maximal dimension of $\cH_{l_j}$ for all $j$.
Thus, any subset of joint initial states that guarantee the CP reduced dynamics can be generated as a subset of a post-selected family from a specific tripartite short quantum Markov chain.  
The equivalence between the families given by Eq.~\eqref{eq:state_decomposition} and Eq.~\eqref{eq:post_select} is also proved in Ref.~\cite{Yu}

From the perspective of the classification of degrees of freedom by the KI decomposition, Proposition~\ref{prop:main} means that a subset of joint initial states guarantees CP reduced dynamics, if and only if all quantum degrees of freedom are  privately owned by the system, and the classical degrees of freedom can be locally accessed by the system without disturbance.
Here, the local accessibility of the classical degrees of freedom without disturbance means that there exists a local measurement, performed only on the system, such that the classical probability distribution can be produced without disturbing all joint initial states in the given subset. 
To be precise, such a local measurement is the projective measurement with respect to the projections from $\cH_s$ onto $\cH_{l_j}\otimes\cH_{r_j}$.

It is of interest whether correlation between the system and its environment plays an essential role in the complete positivity of the reduced dynamics restricted to a given subset of joint initial states. 
It was pointed out in Ref.~\cite{Brodutch2013,Liu2014} that there is no definite relation between correlations in the individual states and the complete positivity of the reduced dynamics;
this also can be seen by noting that the redundant part, the fixed density operators $\{\tilde\omega_{r_je}\}$ in Eq.~\eqref{eq:state_decomposition}, may in principle possess any kind of correlation.
Here, we show that there is indeed a relation between the CP reduced dynamics and the classicality of the correlation in another sense: 
the quantum mutual information of the joint initial states that guarantee CP reduced dynamics is independent of the quantum degrees of freedom.
This can be seen by substituting Eq.~\eqref{eq:state_decomposition} into the quantum mutual information $I(\rho_{se})=S(\rho_s)+S(\rho_e)-S(\rho_{se})$, where $S(\rho)=-\tr(\rho\ln\rho)$ is the von Neumann entropy.
It follows after some algebras that
\begin{align}\label{eq:mutual_information}
	I(\rho_{se}) &= S(\rho_e) + \sum_j p_j \left(S(\tilde\omega_{r_j}) - S(\tilde\omega_{r_je})\right),
\end{align}
where $\tilde\omega_{r_j}:=\tr_e(\tilde\omega_{r_je})$ and $\rho_e:=\tr_s\rho_{se}=\sum_j p_j \tr_{r_j}\tilde\omega_{r_je}$.
In the derivation of Eq.~\eqref{eq:mutual_information}, we have used $S(\oplus_jp_j\rho_j)=-\sum_jp_j\ln p_j+\sum_jp_jS(\rho_j)$ and $S(\rho\otimes\sigma)=S(\rho)+S(\sigma)$.

We here give the Kraus operators of the CP reduced dynamical maps for a subset of joint initial states given by Eq.~\eqref{eq:state_decomposition} and a given joint unitary operator $U$. 
For simplicity, it is enough to consider the union of the supports of all reduced initial states, and from now on we will assume that this reduction of the system Hilbert space has been done.
As a consequence, $\tilde\omega_{r_j}$, defined as $\tr_e\tilde\omega_{r_je}$, have full ranks on the Hilbert space $\cH_{r_j}$.
Then, generalized reduced dynamical maps can be expressed as $\Upsilon[s\to s^\prime](\rho_s) = \sum_{\alpha\beta} K_{\alpha\beta} \rho_s K_{\alpha\beta}^\dagger$ with the Kraus operators from $\cH_s$ to $\cH_{s^\prime}$ being defined by $K_{\alpha\beta} = \langle\alpha_{e^\prime}|UA|\beta_e\rangle$, 
where $A:=\sum_j\Pi_j\tilde\omega_{r_je}^{1/2}\tilde\omega_{r_j}^{-1/2}$ is an operator on $\cH_s\otimes\cH_e$, and $\{|\alpha_{e^\prime}\rangle\}$ and $\{|\beta_e\rangle\}$ are bases of $\cH_{e^\prime}$ and $\cH_e$ respectively.
It is easy to convince oneself that $K_{\alpha\beta}$ is a set of Kraus operators by noting that
\begin{multline}
	\sum_{\alpha\beta} K_{\alpha\beta}^\dagger K_{\alpha\beta} = \tr_e(A^\dagger A) \\
	= \sum_{jk} \Pi_j\tilde\omega_{r_j}^{-1/2}\tr_e(\tilde\omega_{r_je}^{1/2}\tilde\omega_{r_ke}^{1/2})\tilde\omega_{r_k}^{-1/2}\Pi_k
	= \idop_s.
\end{multline}
In the derivation of the third equality above, we have used $\tilde\omega_{r_je}^{1/2}\tilde\omega_{r_ke}^{1/2}=\tilde\omega_{r_je}^{1/2}\delta_{jk}$, as the density operators $\tilde\omega_{r_je}$ and $\tilde\omega_{r_ke}$ are in different subspaces for $j\neq k$.

\section{Conclusion}\label{sec:conclusion}
In summary, we have investigated the complete positivity of the reduced dynamics, from the structural perspective of considering the degrees of freedom of joint initial states in a given subset.
We have shown that a subset of joint system-environment initial states guarantees CP reduced dynamics, if and only if the system privately owns all quantum degrees of freedom and can locally access the classical degrees of freedom, without disturbing all joint initial states in the given subset.
Moreover, we have discussed, also from the structural perspective, the relation between the CP reduced dynamics and the classicality of the system-environment correlations.

\section*{Acknowledgment}
Discussions with F. Buscemi and D. M. Tong are gratefully acknowledged.
This work is supported by the Singapore National Research Foundation under NRF Grant No.~NRF-NRFF2011-07 and the Singapore Ministry of Education Academic Research Fund Tier 1 Project R-263-000-C06-112.

\bibliography{research}

\begin{thebibliography}{30}%
\makeatletter
\providecommand \@ifxundefined [1]{%
 \@ifx{#1\undefined}
}%
\providecommand \@ifnum [1]{%
 \ifnum #1\expandafter \@firstoftwo
 \else \expandafter \@secondoftwo
 \fi
}%
\providecommand \@ifx [1]{%
 \ifx #1\expandafter \@firstoftwo
 \else \expandafter \@secondoftwo
 \fi
}%
\providecommand \natexlab [1]{#1}%
\providecommand \enquote  [1]{``#1''}%
\providecommand \bibnamefont  [1]{#1}%
\providecommand \bibfnamefont [1]{#1}%
\providecommand \citenamefont [1]{#1}%
\providecommand \href@noop [0]{\@secondoftwo}%
\providecommand \href [0]{\begingroup \@sanitize@url \@href}%
\providecommand \@href[1]{\@@startlink{#1}\@@href}%
\providecommand \@@href[1]{\endgroup#1\@@endlink}%
\providecommand \@sanitize@url [0]{\catcode `\\12\catcode `\$12\catcode
  `\&12\catcode `\#12\catcode `\^12\catcode `\_12\catcode `\%12\relax}%
\providecommand \@@startlink[1]{}%
\providecommand \@@endlink[0]{}%
\providecommand \url  [0]{\begingroup\@sanitize@url \@url }%
\providecommand \@url [1]{\endgroup\@href {#1}{\urlprefix }}%
\providecommand \urlprefix  [0]{URL }%
\providecommand \Eprint [0]{\href }%
\providecommand \doibase [0]{http://dx.doi.org/}%
\providecommand \selectlanguage [0]{\@gobble}%
\providecommand \bibinfo  [0]{\@secondoftwo}%
\providecommand \bibfield  [0]{\@secondoftwo}%
\providecommand \translation [1]{[#1]}%
\providecommand \BibitemOpen [0]{}%
\providecommand \bibitemStop [0]{}%
\providecommand \bibitemNoStop [0]{.\EOS\space}%
\providecommand \EOS [0]{\spacefactor3000\relax}%
\providecommand \BibitemShut  [1]{\csname bibitem#1\endcsname}%
\let\auto@bib@innerbib\@empty
\bibitem [{\citenamefont {Kraus}(1983)}]{Kraus1983}%
  \BibitemOpen
  \bibfield  {author} {\bibinfo {author} {\bibfnamefont {K.}~\bibnamefont
  {Kraus}},\ }\href {\doibase 10.1007/3-540-12732-1} {\emph {\bibinfo {title}
  {States, Effects, and Operations: Fundamental Notions of Quantum Theory}}},\
  edited by\ \bibinfo {editor} {\bibfnamefont {K.}~\bibnamefont {Kraus}},
  \bibinfo {editor} {\bibfnamefont {A.}~\bibnamefont {B{\"o}hm}}, \bibinfo
  {editor} {\bibfnamefont {J.}~\bibnamefont {Dollard}}, \ and\ \bibinfo
  {editor} {\bibfnamefont {W.}~\bibnamefont {Wootters}},\ \bibinfo {series}
  {Lecture Notes in Physics}, Vol.\ \bibinfo {volume} {190}\ (\bibinfo
  {publisher} {Springer, Berlin Heidelberg},\ \bibinfo {year}
  {1983})\BibitemShut {NoStop}%
\bibitem [{\citenamefont {Nielsen}\ and\ \citenamefont
  {Chuang}(2000)}]{Nielsen2000}%
  \BibitemOpen
  \bibfield  {author} {\bibinfo {author} {\bibfnamefont {M.~A.}\ \bibnamefont
  {Nielsen}}\ and\ \bibinfo {author} {\bibfnamefont {I.~L.}\ \bibnamefont
  {Chuang}},\ }\href@noop {} {\emph {\bibinfo {title} {Quantum Computation and
  Quantum Information}}}\ (\bibinfo  {publisher} {Cambridge University Press,
  New York},\ \bibinfo {year} {2000})\BibitemShut {NoStop}%
\bibitem [{\citenamefont {Choi}(1975)}]{Choi1975}%
  \BibitemOpen
  \bibfield  {author} {\bibinfo {author} {\bibfnamefont {M.-D.}\ \bibnamefont
  {Choi}},\ }\href {\doibase http://dx.doi.org/10.1016/0024-3795(75)90075-0}
  {\bibfield  {journal} {\bibinfo  {journal} {Linear Algebra Appl.}\ }\textbf
  {\bibinfo {volume} {10}},\ \bibinfo {pages} {285 } (\bibinfo {year}
  {1975})}\BibitemShut {NoStop}%
\bibitem [{\citenamefont {Davies}(1976)}]{Davies1976}%
  \BibitemOpen
  \bibfield  {author} {\bibinfo {author} {\bibfnamefont {E.~B.}\ \bibnamefont
  {Davies}},\ }\href@noop {} {\emph {\bibinfo {title} {Quantum Theory of Open
  Systems}}}\ (\bibinfo  {publisher} {Academic Press, London},\ \bibinfo {year}
  {1976})\BibitemShut {NoStop}%
\bibitem [{\citenamefont {Alicki}\ and\ \citenamefont
  {Lendi}(2007)}]{Alicki2007}%
  \BibitemOpen
  \bibfield  {author} {\bibinfo {author} {\bibfnamefont {R.}~\bibnamefont
  {Alicki}}\ and\ \bibinfo {author} {\bibfnamefont {K.}~\bibnamefont {Lendi}},\
  }\href {\doibase 10.1007/b11976790} {\emph {\bibinfo {title} {Quantum
  Dynamical Semigroups and Applications}}},\ \bibinfo {series} {Lect. Notes
  Phys.}, Vol.\ \bibinfo {volume} {717}\ (\bibinfo  {publisher} {Springer,
  Berlin Heidelberg},\ \bibinfo {year} {2007})\BibitemShut {NoStop}%
\bibitem [{\citenamefont {Breuer}\ and\ \citenamefont
  {Petruccione}(2002)}]{Breuer2002}%
  \BibitemOpen
  \bibfield  {author} {\bibinfo {author} {\bibfnamefont {H.-P.}\ \bibnamefont
  {Breuer}}\ and\ \bibinfo {author} {\bibfnamefont {F.}~\bibnamefont
  {Petruccione}},\ }\href@noop {} {\emph {\bibinfo {title} {The Thoery of Open
  Quantum Systems}}}\ (\bibinfo  {publisher} {Oxford University Press, New
  York},\ \bibinfo {year} {2002})\BibitemShut {NoStop}%
\bibitem [{\citenamefont {Rivas}\ and\ \citenamefont
  {Huelga}(2012)}]{Rivas2012}%
  \BibitemOpen
  \bibfield  {author} {\bibinfo {author} {\bibfnamefont {A.}~\bibnamefont
  {Rivas}}\ and\ \bibinfo {author} {\bibfnamefont {S.~F.}\ \bibnamefont
  {Huelga}},\ }\href {\doibase 10.1007/978-3-642-23354-8} {\emph {\bibinfo
  {title} {Open Quantum Systems: An Introduction}}}\ (\bibinfo  {publisher}
  {Springer, Berlin Heidelberg},\ \bibinfo {year} {2012})\BibitemShut {NoStop}%
\bibitem [{\citenamefont {Pechukas}(1994)}]{Pechukas1994}%
  \BibitemOpen
  \bibfield  {author} {\bibinfo {author} {\bibfnamefont {P.}~\bibnamefont
  {Pechukas}},\ }\href {\doibase 10.1103/PhysRevLett.73.1060} {\bibfield
  {journal} {\bibinfo  {journal} {Phys. Rev. Lett.}\ }\textbf {\bibinfo
  {volume} {73}},\ \bibinfo {pages} {1060} (\bibinfo {year}
  {1994})}\BibitemShut {NoStop}%
\bibitem [{\citenamefont {Alicki}(1995)}]{Alicki1995}%
  \BibitemOpen
  \bibfield  {author} {\bibinfo {author} {\bibfnamefont {R.}~\bibnamefont
  {Alicki}},\ }\href {\doibase 10.1103/PhysRevLett.75.3020} {\bibfield
  {journal} {\bibinfo  {journal} {Phys. Rev. Lett.}\ }\textbf {\bibinfo
  {volume} {75}},\ \bibinfo {pages} {3020} (\bibinfo {year}
  {1995})}\BibitemShut {NoStop}%
\bibitem [{\citenamefont {Pechukas}(1995)}]{Pechukas1995}%
  \BibitemOpen
  \bibfield  {author} {\bibinfo {author} {\bibfnamefont {P.}~\bibnamefont
  {Pechukas}},\ }\href {\doibase 10.1103/PhysRevLett.75.3021} {\bibfield
  {journal} {\bibinfo  {journal} {Phys. Rev. Lett.}\ }\textbf {\bibinfo
  {volume} {75}},\ \bibinfo {pages} {3021} (\bibinfo {year}
  {1995})}\BibitemShut {NoStop}%
\bibitem [{\citenamefont {\v{S}telmachovi\v{c}}\ and\ \citenamefont
  {Bu\v{z}ek}(2001)}]{Stelmachovic2001}%
  \BibitemOpen
  \bibfield  {author} {\bibinfo {author} {\bibfnamefont {P.}~\bibnamefont
  {\v{S}telmachovi\v{c}}}\ and\ \bibinfo {author} {\bibfnamefont
  {V.}~\bibnamefont {Bu\v{z}ek}},\ }\href {\doibase 10.1103/PhysRevA.64.062106}
  {\bibfield  {journal} {\bibinfo  {journal} {Phys. Rev. A}\ }\textbf {\bibinfo
  {volume} {64}},\ \bibinfo {pages} {062106} (\bibinfo {year}
  {2001})}\BibitemShut {NoStop}%
\bibitem [{\citenamefont {Jordan}\ \emph {et~al.}(2004)\citenamefont {Jordan},
  \citenamefont {Shaji},\ and\ \citenamefont {Sudarshan}}]{Jordan2004}%
  \BibitemOpen
  \bibfield  {author} {\bibinfo {author} {\bibfnamefont {T.~F.}\ \bibnamefont
  {Jordan}}, \bibinfo {author} {\bibfnamefont {A.}~\bibnamefont {Shaji}}, \
  and\ \bibinfo {author} {\bibfnamefont {E.~C.~G.}\ \bibnamefont {Sudarshan}},\
  }\href {\doibase 10.1103/PhysRevA.70.052110} {\bibfield  {journal} {\bibinfo
  {journal} {Phys. Rev. A}\ }\textbf {\bibinfo {volume} {70}},\ \bibinfo
  {pages} {052110} (\bibinfo {year} {2004})}\BibitemShut {NoStop}%
\bibitem [{\citenamefont {Carteret}\ \emph {et~al.}(2008)\citenamefont
  {Carteret}, \citenamefont {Terno},\ and\ \citenamefont
  {\.{Z}yczkowski}}]{Carteret2008}%
  \BibitemOpen
  \bibfield  {author} {\bibinfo {author} {\bibfnamefont {H.~A.}\ \bibnamefont
  {Carteret}}, \bibinfo {author} {\bibfnamefont {D.~R.}\ \bibnamefont {Terno}},
  \ and\ \bibinfo {author} {\bibfnamefont {K.}~\bibnamefont {\.{Z}yczkowski}},\
  }\href {\doibase 10.1103/PhysRevA.77.042113} {\bibfield  {journal} {\bibinfo
  {journal} {Phys. Rev. A}\ }\textbf {\bibinfo {volume} {77}},\ \bibinfo
  {pages} {042113} (\bibinfo {year} {2008})}\BibitemShut {NoStop}%
\bibitem [{\citenamefont {Dai}\ \emph {et~al.}(2015)\citenamefont {Dai},
  \citenamefont {Len},\ and\ \citenamefont {Ng}}]{Dai2015}%
  \BibitemOpen
  \bibfield  {author} {\bibinfo {author} {\bibfnamefont {J.}~\bibnamefont
  {Dai}}, \bibinfo {author} {\bibfnamefont {Y.~L.}\ \bibnamefont {Len}}, \ and\
  \bibinfo {author} {\bibfnamefont {H.~K.}\ \bibnamefont {Ng}},\ }\href@noop {}
  {\  (\bibinfo {year} {2015})},\ \Eprint {http://arxiv.org/abs/1508.06736}
  {arXiv:1508.06736 [quant-th]} \BibitemShut {NoStop}%
\bibitem [{\citenamefont {Wolf}\ \emph {et~al.}(2008)\citenamefont {Wolf},
  \citenamefont {Eisert}, \citenamefont {Cubitt},\ and\ \citenamefont
  {Cirac}}]{Wolf2008}%
  \BibitemOpen
  \bibfield  {author} {\bibinfo {author} {\bibfnamefont {M.~M.}\ \bibnamefont
  {Wolf}}, \bibinfo {author} {\bibfnamefont {J.}~\bibnamefont {Eisert}},
  \bibinfo {author} {\bibfnamefont {T.~S.}\ \bibnamefont {Cubitt}}, \ and\
  \bibinfo {author} {\bibfnamefont {J.~I.}\ \bibnamefont {Cirac}},\ }\href
  {\doibase 10.1103/PhysRevLett.101.150402} {\bibfield  {journal} {\bibinfo
  {journal} {Phys. Rev. Lett.}\ }\textbf {\bibinfo {volume} {101}},\ \bibinfo
  {pages} {150402} (\bibinfo {year} {2008})}\BibitemShut {NoStop}%
\bibitem [{\citenamefont {Breuer}\ \emph {et~al.}(2009)\citenamefont {Breuer},
  \citenamefont {Laine},\ and\ \citenamefont {Piilo}}]{Breuer2009}%
  \BibitemOpen
  \bibfield  {author} {\bibinfo {author} {\bibfnamefont {H.-P.}\ \bibnamefont
  {Breuer}}, \bibinfo {author} {\bibfnamefont {E.-M.}\ \bibnamefont {Laine}}, \
  and\ \bibinfo {author} {\bibfnamefont {J.}~\bibnamefont {Piilo}},\ }\href
  {\doibase 10.1103/PhysRevLett.103.210401} {\bibfield  {journal} {\bibinfo
  {journal} {Phys. Rev. Lett.}\ }\textbf {\bibinfo {volume} {103}},\ \bibinfo
  {pages} {210401} (\bibinfo {year} {2009})}\BibitemShut {NoStop}%
\bibitem [{\citenamefont {Rivas}\ \emph {et~al.}(2010)\citenamefont {Rivas},
  \citenamefont {Huelga},\ and\ \citenamefont {Plenio}}]{Rivas2010}%
  \BibitemOpen
  \bibfield  {author} {\bibinfo {author} {\bibfnamefont {A.}~\bibnamefont
  {Rivas}}, \bibinfo {author} {\bibfnamefont {S.~F.}\ \bibnamefont {Huelga}}, \
  and\ \bibinfo {author} {\bibfnamefont {M.~B.}\ \bibnamefont {Plenio}},\
  }\href {\doibase 10.1103/PhysRevLett.105.050403} {\bibfield  {journal}
  {\bibinfo  {journal} {Phys. Rev. Lett.}\ }\textbf {\bibinfo {volume} {105}},\
  \bibinfo {pages} {050403} (\bibinfo {year} {2010})}\BibitemShut {NoStop}%
\bibitem [{\citenamefont {Lu}\ \emph {et~al.}(2010)\citenamefont {Lu},
  \citenamefont {Wang},\ and\ \citenamefont {Sun}}]{Lu2010a}%
  \BibitemOpen
  \bibfield  {author} {\bibinfo {author} {\bibfnamefont {X.-M.}\ \bibnamefont
  {Lu}}, \bibinfo {author} {\bibfnamefont {X.}~\bibnamefont {Wang}}, \ and\
  \bibinfo {author} {\bibfnamefont {C.~P.}\ \bibnamefont {Sun}},\ }\href
  {\doibase 10.1103/PhysRevA.82.042103} {\bibfield  {journal} {\bibinfo
  {journal} {Phys. Rev. A}\ }\textbf {\bibinfo {volume} {82}},\ \bibinfo
  {pages} {042103} (\bibinfo {year} {2010})}\BibitemShut {NoStop}%
\bibitem [{\citenamefont {Rodr\'{\i}guez-Rosario}\ \emph
  {et~al.}(2008)\citenamefont {Rodr\'{\i}guez-Rosario}, \citenamefont {Modi},
  \citenamefont {Kuah}, \citenamefont {Shaji},\ and\ \citenamefont
  {Sudarshan}}]{Rodriguez-Rosario2008}%
  \BibitemOpen
  \bibfield  {author} {\bibinfo {author} {\bibfnamefont {C.~A.}\ \bibnamefont
  {Rodr\'{\i}guez-Rosario}}, \bibinfo {author} {\bibfnamefont {K.}~\bibnamefont
  {Modi}}, \bibinfo {author} {\bibfnamefont {A.-m.}\ \bibnamefont {Kuah}},
  \bibinfo {author} {\bibfnamefont {A.}~\bibnamefont {Shaji}}, \ and\ \bibinfo
  {author} {\bibfnamefont {E.~C.~G.}\ \bibnamefont {Sudarshan}},\ }\href
  {http://stacks.iop.org/1751-8121/41/i=20/a=205301} {\bibfield  {journal}
  {\bibinfo  {journal} {J. Phys. A: Math. Theor.}\ }\textbf {\bibinfo {volume}
  {41}},\ \bibinfo {pages} {205301} (\bibinfo {year} {2008})}\BibitemShut
  {NoStop}%
\bibitem [{\citenamefont {Shabani}\ and\ \citenamefont
  {Lidar}(2009)}]{Shabani2009}%
  \BibitemOpen
  \bibfield  {author} {\bibinfo {author} {\bibfnamefont {A.}~\bibnamefont
  {Shabani}}\ and\ \bibinfo {author} {\bibfnamefont {D.~A.}\ \bibnamefont
  {Lidar}},\ }\href {\doibase 10.1103/PhysRevLett.102.100402} {\bibfield
  {journal} {\bibinfo  {journal} {Phys. Rev. Lett.}\ }\textbf {\bibinfo
  {volume} {102}},\ \bibinfo {pages} {100402} (\bibinfo {year}
  {2009})}\BibitemShut {NoStop}%
\bibitem [{\citenamefont {Brodutch}\ \emph {et~al.}(2013)\citenamefont
  {Brodutch}, \citenamefont {Datta}, \citenamefont {Modi}, \citenamefont
  {Rivas},\ and\ \citenamefont {Rodr\'\i{}guez-Rosario}}]{Brodutch2013}%
  \BibitemOpen
  \bibfield  {author} {\bibinfo {author} {\bibfnamefont {A.}~\bibnamefont
  {Brodutch}}, \bibinfo {author} {\bibfnamefont {A.}~\bibnamefont {Datta}},
  \bibinfo {author} {\bibfnamefont {K.}~\bibnamefont {Modi}}, \bibinfo {author}
  {\bibfnamefont {A.}~\bibnamefont {Rivas}}, \ and\ \bibinfo {author}
  {\bibfnamefont {C.~A.}\ \bibnamefont {Rodr\'\i{}guez-Rosario}},\ }\href
  {\doibase 10.1103/PhysRevA.87.042301} {\bibfield  {journal} {\bibinfo
  {journal} {Phys. Rev. A}\ }\textbf {\bibinfo {volume} {87}},\ \bibinfo
  {pages} {042301} (\bibinfo {year} {2013})}\BibitemShut {NoStop}%
\bibitem [{\citenamefont {Liu}\ and\ \citenamefont {Tong}(2014)}]{Liu2014}%
  \BibitemOpen
  \bibfield  {author} {\bibinfo {author} {\bibfnamefont {L.}~\bibnamefont
  {Liu}}\ and\ \bibinfo {author} {\bibfnamefont {D.~M.}\ \bibnamefont {Tong}},\
  }\href {\doibase 10.1103/PhysRevA.90.012305} {\bibfield  {journal} {\bibinfo
  {journal} {Phys. Rev. A}\ }\textbf {\bibinfo {volume} {90}},\ \bibinfo
  {pages} {012305} (\bibinfo {year} {2014})}\BibitemShut {NoStop}%
\bibitem [{\citenamefont {Buscemi}(2014)}]{Buscemi2014a}%
  \BibitemOpen
  \bibfield  {author} {\bibinfo {author} {\bibfnamefont {F.}~\bibnamefont
  {Buscemi}},\ }\href {\doibase 10.1103/PhysRevLett.113.140502} {\bibfield
  {journal} {\bibinfo  {journal} {Phys. Rev. Lett.}\ }\textbf {\bibinfo
  {volume} {113}},\ \bibinfo {pages} {140502} (\bibinfo {year}
  {2014})}\BibitemShut {NoStop}%
\bibitem [{\citenamefont {Dominy}\ \emph {et~al.}(2016)\citenamefont {Dominy},
  \citenamefont {Shabani},\ and\ \citenamefont {Lidar}}]{Dominy2016}%
  \BibitemOpen
  \bibfield  {author} {\bibinfo {author} {\bibfnamefont {J.~M.}\ \bibnamefont
  {Dominy}}, \bibinfo {author} {\bibfnamefont {A.}~\bibnamefont {Shabani}}, \
  and\ \bibinfo {author} {\bibfnamefont {D.~A.}\ \bibnamefont {Lidar}},\ }\href
  {\doibase 10.1007/s11128-015-1148-0} {\bibfield  {journal} {\bibinfo
  {journal} {Quantum Inf. Process.}\ }\textbf {\bibinfo {volume} {15}},\
  \bibinfo {pages} {465} (\bibinfo {year} {2016})}\BibitemShut {NoStop}%
\bibitem [{\citenamefont {Dominy}\ and\ \citenamefont
  {Lidar}(2016)}]{Dominy2016a}%
  \BibitemOpen
  \bibfield  {author} {\bibinfo {author} {\bibfnamefont {J.~M.}\ \bibnamefont
  {Dominy}}\ and\ \bibinfo {author} {\bibfnamefont {D.~A.}\ \bibnamefont
  {Lidar}},\ }\href {\doibase 10.1007/s11128-015-1228-1} {\bibfield  {journal}
  {\bibinfo  {journal} {Quantum Information Processing}\ }\textbf {\bibinfo
  {volume} {15}},\ \bibinfo {pages} {1349} (\bibinfo {year}
  {2016})}\BibitemShut {NoStop}%
\bibitem [{\citenamefont {T\"{u}rkmen}\ \emph {et~al.}(2016)\citenamefont
  {T\"{u}rkmen}, \citenamefont {Ver\c{c}in},\ and\ \citenamefont
  {Y\i{}lmaz}}]{Tuerkmen2016}%
  \BibitemOpen
  \bibfield  {author} {\bibinfo {author} {\bibfnamefont {A.}~\bibnamefont
  {T\"{u}rkmen}}, \bibinfo {author} {\bibfnamefont {A.}~\bibnamefont
  {Ver\c{c}in}}, \ and\ \bibinfo {author} {\bibfnamefont {S.}~\bibnamefont
  {Y\i{}lmaz}},\ }\href@noop {} {\  (\bibinfo {year} {2016})},\ \Eprint
  {http://arxiv.org/abs/1603.05009} {arXiv:1603.05009 [quant-th]} \BibitemShut
  {NoStop}%
\bibitem [{\citenamefont {Koashi}\ and\ \citenamefont
  {Imoto}(2002)}]{Koashi2002}%
  \BibitemOpen
  \bibfield  {author} {\bibinfo {author} {\bibfnamefont {M.}~\bibnamefont
  {Koashi}}\ and\ \bibinfo {author} {\bibfnamefont {N.}~\bibnamefont {Imoto}},\
  }\href {\doibase 10.1103/PhysRevA.66.022318} {\bibfield  {journal} {\bibinfo
  {journal} {Phys. Rev. A}\ }\textbf {\bibinfo {volume} {66}},\ \bibinfo
  {pages} {022318} (\bibinfo {year} {2002})}\BibitemShut {NoStop}%
\bibitem [{\citenamefont {Hayden}\ \emph {et~al.}(2004)\citenamefont {Hayden},
  \citenamefont {Jozsa}, \citenamefont {Petz},\ and\ \citenamefont
  {Winter}}]{Hayden2004}%
  \BibitemOpen
  \bibfield  {author} {\bibinfo {author} {\bibfnamefont {P.}~\bibnamefont
  {Hayden}}, \bibinfo {author} {\bibfnamefont {R.}~\bibnamefont {Jozsa}},
  \bibinfo {author} {\bibfnamefont {D.}~\bibnamefont {Petz}}, \ and\ \bibinfo
  {author} {\bibfnamefont {A.}~\bibnamefont {Winter}},\ }\href {\doibase
  10.1007/s00220-004-1049-z} {\bibfield  {journal} {\bibinfo  {journal}
  {Commun. Math. Phys.}\ }\textbf {\bibinfo {volume} {246}},\ \bibinfo {pages}
  {359} (\bibinfo {year} {2004})}\BibitemShut {NoStop}%
\bibitem [{\citenamefont {Buscemi}(2016)}]{Buscemi_private_communication}%
  \BibitemOpen
  \bibfield  {author} {\bibinfo {author} {\bibfnamefont {F.}~\bibnamefont
  {Buscemi}},\ }\href@noop {} {}\bibinfo {howpublished} {private communication}
  (\bibinfo {year} {2016})\BibitemShut {NoStop}%
\bibitem [{\citenamefont {Yu}\ \emph {et~al.}(2016)\citenamefont {Yu},
  \citenamefont {Zhang}, \citenamefont {Liu},\ and\ \citenamefont {Tong}}]{Yu}%
  \BibitemOpen
  \bibfield  {author} {\bibinfo {author} {\bibfnamefont {X.-D.}\ \bibnamefont
  {Yu}}, \bibinfo {author} {\bibfnamefont {D.-J.}\ \bibnamefont {Zhang}},
  \bibinfo {author} {\bibfnamefont {L.}~\bibnamefont {Liu}}, \ and\ \bibinfo
  {author} {\bibfnamefont {D.~M.}\ \bibnamefont {Tong}},\ }\href@noop {}
  {\enquote {\bibinfo {title} {A necessary and sufficient condition for
  completely positive reduced dynamics},}\ }\bibinfo {howpublished}
  {unpublished manuscript} (\bibinfo {year} {2016})\BibitemShut {NoStop}%
\end{thebibliography}%

\onecolumngrid

\end{document}